\documentclass[12pt,twoside]{article}

 \usepackage{float}
\usepackage{graphicx}
\usepackage{epstopdf}
\usepackage{graphicx}
\usepackage{epic}
\usepackage{multirow}
\usepackage{tikz}
\usepackage{xcolor}
\usetikzlibrary{arrows,shapes,chains}

\renewcommand{\paragraph}{\roman{paragraph}}
\usepackage[a4paper]{geometry}
\setlength{\textwidth}{6.3in}
\setlength{\textheight}{8.8in}
\setlength{\topmargin}{0pt}
\setlength{\headsep}{25pt}
\setlength{\headheight}{0pt}
\setlength{\oddsidemargin}{0pt}
\setlength{\evensidemargin}{0pt}


\makeatletter
\renewcommand\title[1]{\gdef\@title{\reset@font\Large\bfseries #1}}
\renewcommand\section{\@startsection {section}{1}{\z@}%
                                   {-3.5ex \@plus -1ex \@minus -.2ex}%
                                   {2.3ex \@plus.2ex}%
                                   {\normalfont\large\bfseries}}
\renewcommand\subsection{\@startsection{subsection}{2}{\z@}%
                                     {-3ex\@plus -1ex \@minus -.2ex}%
                                     {1.5ex \@plus .2ex}%
                                     {\normalfont\normalsize\bfseries}}
\renewcommand\subsubsection{\@startsection{subsubsection}{3}{\z@}%
                                     {-2.5ex\@plus -1ex \@minus -.2ex}%
                                     {1.5ex \@plus .2ex}%
                                     {\normalfont\normalsize\bfseries}}

\def\@runningauthor{}\newcommand{\runningauthor}[1]{\def\runningauthor{#1}}
\def\@runningtitle{}\newcommand{\runningtitle}[1]{\def\runningtitle{#1}}

\renewcommand{\ps@plain}{%
\renewcommand{\@evenhead}{\footnotesize\scshape \hfill\runningauthor\hfill}
\renewcommand{\@oddhead}{\footnotesize\scshape \hfill\runningtitle\hfill}}

\newcommand{\F}{\mathbb{F}}

\pagestyle{plain}

\g@addto@macro\bfseries{\boldmath}

\makeatother



\usepackage{amsthm,amsmath,amssymb}
\usepackage{cite}
\usepackage{graphicx}

\usepackage[colorlinks=true,citecolor=black,linkcolor=black,urlcolor=blue]{hyperref}

\theoremstyle{plain}
\newtheorem{theorem}{Theorem}[section]

\newtheorem{lem}[theorem]{Lemma}

\newtheorem{prop}[theorem]{Proposition}

\theoremstyle{definition}
\newtheorem{definition}[theorem]{Definition}
\newtheorem{example}[theorem]{Example}
\newtheorem{conjecture}[theorem]{Conjecture}

\theoremstyle{remark}
\newtheorem{remark}[theorem]{Remark}








\runningauthor{}

\date{}

\begin{document}

\title{Two conjectures on the largest minimum distances of binary self-orthogonal codes with dimension 5}
\author{Minjia Shi\thanks{smjwcl.good@163.com}, Shitao Li\thanks{lishitao0216@163.com}, Jon-Lark Kim\thanks{jlkim@sogang.ac.kr}
\thanks{Minjia Shi and Shitao Li are with School of Mathematical Sciences, Anhui University, Hefei, China.
Jon-Lark Kim is with Sogang University, Seoul, South Korea.}}

\date{}
    \maketitle

\begin{abstract}
The purpose of this paper is to solve the two conjectures on the largest minimum distance $d_{so}(n,5)$ of a binary self-orthogonal $[n,5]$ code proposed by Kim and Choi (IEEE Trans. Inf. Theory, 2022). The determination of $d_{so}(n,k)$ has been a fundamental and difficult problem in coding theory because there are too many binary self-orthogonal codes as the dimension $k$ increases. Recently, Kim et al. (2021) considered the shortest self-orthogonal embedding of a binary linear code, and many binary optimal self-orthogonal $[n,k]$ codes were constructed for $k=4,5$. Kim and Choi (2022) improved some results of Kim et al. (2021) and made two conjectures on $d_{so}(n,5)$. In this paper, we develop a general method to determine the exact value of $d_{so}(n,k)$ for $k=5,6$ and show that the two conjectures made by Kim and Choi (2022) are true.
\end{abstract}
{\bf Keywords:} Binary self-orthogonal codes, Simplex codes.\\
{\bf Mathematics Subject Classification} 94B05 15B05 12E10

\section{Introduction}
In coding theory, self-orthogonal (for short, SO) codes over finite fields form an important class of codes which are asymptotically good \cite{SO-good} and have been extensively studied.
Some typical linear codes are SO, for example, the binary simplex
$[7, 3, 4]$ code $S_3$, the extended binary $[8, 4, 4]$ Hamming code, the extended binary $[24, 12, 8]$ Golay code, and the extended ternary $[12, 6, 6]$ Golay code.
It is well-known that they have close connections with group theory \cite{C-1}, design theory \cite{t-design}, and lattice theory \cite{B-1,C-1,H-1}. They also have been employed to construct quantum codes \cite{Quantum-1,Quantum-2}.
Therefore, the construction and classification of SO codes has been a hot topic.

Let $d(n,k)$ denote the largest minimum distance among all binary $[n, k]$ codes and
$d_{so}(n,k)$ denote the largest minimum distance among all binary $[n, k]$ SO codes. A binary $[n,k]$ SO code is optimal SO if it has the minimum distance $d_{so}(n,k)$. One of fundamental topics in coding theory is to determine the largest minimum distance of SO codes for various lengths and dimensions. In 2006, Bouyukliev et al. \cite{SO-40} completed the characterization of binary optimal SO codes for parameters up to length 40 and dimension 10, and gave the exact value of $d_{so}(n,3)$. In \cite{Li-Xu-Zhao}, Li, Xu, and Zhao partially characterized four-dimensional optimal SO codes by systems of linear equations.
Later, Kim et al. \cite{Kim-embedding} completely determined the remaining cases and partially characterized the exact value of $d_{so}(n,5)$ by embedding linear codes into SO codes.
On the classification of binary SO codes, the reader is referred to
\cite{Kim-15,Li-Xu-Zhao,Pless-1,Shi-1} for recent papers.

Recently, Kim and Choi \cite{Kim-SO} constructed many new optimal binary SO codes of dimensions 5 and 6 and gave two conjectures.
In this paper, we consider the next step of \cite{SO-40,Kim-embedding,Kim-SO,Li-Xu-Zhao}.
 Using binary SO codes related to the simplex codes, we determine the exact values of $d_{so}(n,5)$ and $d_{so}(n,6)$, which solves the two conjectures proposed by Kim and Choi \cite{Kim-SO}, and furthermore completely generalize some results of the paper.

The paper is organized as follows. In Section 2, we give some notations and preliminaries.
In Section 3, we study some properties of binary SO codes related to the simplex codes. In Section 4, we characterize the exact value of $d_{so}(n,5)$. In Section 5, we characterize the exact value of $d_{so}(n,6)$. In Section 6, we conclude the paper.

\section{Preliminaries}
Let $\F_2$ denote the finite field with 2 elements. For any ${\bf x}\in \F_2^n$, the {\em Hamming weight} of ${\bf x}$ is the number of nonzero components of ${\bf x}$. The {\em Hamming distance} between two vectors ${\bf x}, {\bf y}\in \F_2^n$ is defined to be the number of coordinates in which ${\bf x}$ and ${\bf y}$ differ. The {\em minimum (Hamming) distance} of a code is the smallest Hamming distance between distinct codewords.

An binary linear $[n,k,d]$ code $C$ is a $k$-dimensional subspace of $\F_2^n$. where $d$ is the minimum Hamming distance of $C$.
The dual code $C^{\perp}$ of a binary linear $[n,k]$ code $C$ is defined as
$$C^{\perp}=\{\textbf y\in \F_2^n~|~\langle \textbf x, \textbf y\rangle=0, {\rm for\ all}\ \textbf x\in C \},$$
where $\langle \textbf x, \textbf y\rangle=\sum_{i=1}^n x_iy_i$ for $\textbf x = (x_1,x_2, \ldots, x_n)$ and $\textbf y = (y_1,y_2, \ldots, y_n)\in \F_2^n$.

\begin{definition}
A binary linear code $C$ is {\em self-orthogonal} (SO) if $C\subseteq C^\perp$.
\end{definition}

It is well-known that the Griesmer bound \cite[Chap. 2, Section 7]{Huffman} on a linear $[n,k,d]$
code over $\F_2$ is given by $n\geq \sum_{i=0}^{k-1}\left\lceil \frac{d}{2^i}\right\rceil$, where $\lceil a \rceil$ is the least integer greater than or equal to a real number $a$.
A binary $[n,k]$ SO code is {\em optimal} if it has the largest minimum distance among all binary $[n,k]$ SO codes.

A vector $x = (x_1, x_2, \ldots , x_n)\in \F^n_2$
is {\em even-like} if $\sum_{i=1}^nx_i=0$ and is {\em odd-like} otherwise.
A binary code is said to be {\em even-like} if it
has only even-like codewords, and is said to be {\em odd-like} if it
is not even-like.

\begin{remark}
Since the self-orthogonality, the value $d_{so}(n,k)$ is always even. In addition, the best possible minimum distance of a binary $[n,k]$ SO code is $2\left\lfloor d(n,k)/2\right\rfloor$, that is to say, $d_{so}(n,k)\leq 2\left\lfloor d(n,k)/2\right\rfloor$.
\end{remark}

\section{Binary SO codes related to the simplex codes}

Assume that $S_k$ is a matrix whose columns are all nonzero vectors in $\F_2^k$.
It is well-known that $S_k$ generates a binary simplex code, which is an one-weight SO $[2^k-1,k,2^{k-1}]$ Griesmer code for $k\geq 3$ \cite{Huffman}.
The following lemma shows that we can construct a family of SO codes from a SO code.

\begin{lem}\label{lemma-k-3}
Assume that $S_k$ is a matrix whose columns are all nonzero vectors in $\F_2^k$ for $k\geq 3$. Let $C$ be a binary $[n,k,d]$ linear code with the generator matrix $G$. Then $C$ is SO if and only if $C'$ with the following matrix
$$G'=[\underbrace{S_k|\cdots|S_k}_m|G]$$
is a binary $[m(2^k-1)+n,k,2^{k-1}m+d]$ SO code.
\end{lem}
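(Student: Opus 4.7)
The plan is to verify three things: the parameters (length, dimension) of $C'$, the self-orthogonality equivalence, and the minimum distance.

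First I would handle length and dimension. The length $m(2^k-1)+n$ is immediate by counting columns. Since $G$ has rank $k$, its rows are linearly independent, so the rows of $G'=[S_k|\cdots|S_k|G]$ (obtained by concatenating columns) are linearly independent as well. Hence $C'$ has dimension $k$.

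Next I would prove the self-orthogonality equivalence. Let $g_1,\dots,g_k$ denote the rows of $G$ and $s_1,\dots,s_k$ the rows of $S_k$. Then the rows of $G'$ are $g'_i=(s_i,\dots,s_i,g_i)$ with $m$ copies of $s_i$. A direct computation gives
\[
\langle g'_i,g'_j\rangle \;=\; m\,\langle s_i,s_j\rangle + \langle g_i,g_j\rangle \pmod 2.
\]
Because $S_k$ generates the simplex code, which is SO for $k\geq 3$, we have $\langle s_i,s_j\rangle\equiv 0\pmod 2$ for all $i,j$ (including $i=j$, since $2^{k-1}$ is even for $k\geq 2$). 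Therefore $\langle g'_i,g'_j\rangle\equiv \langle g_i,g_j\rangle\pmod 2$, independently of $m$. Since self-orthogonality of a linear code is equivalent to its generator rows being pairwise orthogonal and each being even-weight, we conclude that $C'$ is SO if and only if $C$ is SO.

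Finally I would compute the minimum distance. Every nonzero codeword of $C'$ has the form $vG'=(vS_k,\dots,vS_k,vG)$ for some nonzero $v\in\F_2^k$. Because the simplex code generated by $S_k$ is a one-weight code with weight $2^{k-1}$, we have $\mathrm{wt}(vS_k)=2^{k-1}$. Hence
\[
\mathrm{wt}(vG')\;=\;m\cdot 2^{k-1}+\mathrm{wt}(vG)\;\geq\;m\cdot 2^{k-1}+d,
\]
with equality achieved by choosing $v$ so that $vG$ is a minimum-weight codeword of $C$. Thus the minimum distance of $C'$ is exactly $2^{k-1}m+d$.

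I do not expect a genuine obstacle in this argument; the only point that requires any care is noting that the contribution $m\langle s_i,s_j\rangle$ vanishes modulo $2$ regardless of the parity of $m$, which is exactly where the hypothesis $k\geq 3$ (guaranteeing that the simplex code is SO) is used.
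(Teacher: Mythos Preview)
Your proof is correct and follows essentially the same approach as the paper: the paper expresses the self-orthogonality equivalence via the matrix identity $G'G'^T=GG^T$ (which is your row-by-row inner product computation), and uses the one-weight property of the simplex code for the minimum distance exactly as you do. Your write-up is in fact more explicit about verifying the dimension and about where the hypothesis $k\geq 3$ enters.
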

\begin{proof}
It is well-known that $S_k$ generates a binary simplex code, which is an one-weight SO $[2^k-1,k,2^{k-1}]$ Griesmer code. So
$$G'G'^T=GG^T.$$
Therefore, $C$ is SO code if and only if $C'$ is SO. Since $C$ has the minimum distance $d$, $C'$ has the minimum distance at least $d+2^{k-1}m$. Since the simplex code is a one-weight code, there is a codeword of weight $d+2^{k-1}m$ in $C'$. The converse is also true. This completes the proof.
\end{proof}

\begin{example}
Let $C$ be a binary $[8,4,4]$ SO code with the generator matrix
$$\left(\begin{array}{c}
1 0 0 0 1 1 1 0\\
0 1 0 0 1 1 0 1\\
0 0 1 0 1 0 1 1\\
0 0 0 1 0 1 1 1
\end{array}\right).$$
Then the following matrix
$$\left(\begin{array}{c|c}
      \begin{array}{c}
                    1 0 0 0 1 0 0 1 1 0 1 0 1 1 1\\
0 1 0 0 1 1 0 1 0 1 1 1 1 0 0\\
0 0 1 0 0 1 1 0 1 0 1 1 1 1 0\\
0 0 0 1 0 0 1 1 0 1 0 1 1 1 1\\
                  \end{array}
&
  \begin{array}{c}
     1 0 0 0 1 1 1 0\\
0 1 0 0 1 1 0 1\\
0 0 1 0 1 0 1 1\\
0 0 0 1 0 1 1 1
    \end{array}
      \end{array}\right)$$
  generates a binary $[23,4,12]$ SO code.
\end{example}

\begin{remark}
The lemma \ref{lemma-k-3} shows that we can construct a binary $[n+2^k-1,k,d+2^{k-1}]$ SO code from a binary $[n,k,d]$ SO code. But the converse may not be true. For example, there exists a binary $[45,5,22]$ SO code \cite{Kim-SO}, but there are no binary $[14,5,6]$ SO codes \cite{SO-40}. In fact, the result holds when $2d-n\geq 0$ (see Lemma 3.6 in \cite{AHS-BLCD}). That is to say, if $2d-n\geq 0$ and there exists a binary $[n+2^k-1,k,d+2^{k-1}]$ SO code, then there is a binary $[n,k,d]$ SO code.
\end{remark}

\section{Two conjectures on binary optimal $[n,5]$ self-orthogonal codes}

Kim and Choi propose two conjectures in \cite{Kim-SO}. In this subsection, we prove the conjectures using the above method.

\begin{theorem}{\rm \cite[Conjecture 19]{Kim-SO}}
For $n\geq 32$, if $n\equiv 14,22,29~({\rm mod}~31)$, then $d_{so}(n,5)=d(n,5)$, i.e., there exists an $[n,5,d(n,5)]$ SO code.
\end{theorem}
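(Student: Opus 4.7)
The plan is to use Lemma \ref{lemma-k-3} as the engine for climbing in steps of $31$, reducing the theorem to a handful of base cases. Since $d_{so}(n,5)\leq d(n,5)$ is automatic, the content is to exhibit, for each $n\geq 32$ in the three residue classes, an $[n,5,d(n,5)]$ SO code.

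First I would pin down $d(n,5)$. Writing $n=31m+r$ with $r\in\{14,22,29\}$ and $m\geq 1$ (since $n\geq 32$ forces $n\geq 45,53,60$ in the three classes), I substitute $d=16m+d_0$ into the Griesmer bound; the $m$-dependent terms sum to exactly $31m$, leaving the residual inequality $r\geq \sum_{i=0}^{4}\lceil d_0/2^i\rceil$. Direct checking yields the maxima $d_0\leq 6,10,14$ for $r=14,22,29$ respectively, and these values are attained by linear $[n,5]$ codes in the standard tables. Hence $d(n,5)$ equals $16m+6$, $16m+10$, or $16m+14$ in the three cases.

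Second I would deal with the three base cases $m=1$, namely lengths $45,53,60$. The $[45,5,22]$ SO code appears in the remark following Lemma \ref{lemma-k-3}, while $[53,5,26]$ and $[60,5,30]$ SO codes are among the constructions of \cite{Kim-SO}; each matches the Griesmer value computed above. Then I would induct on $m$: given an optimal SO code with parameters $[31m+r,\,5,\,16m+d_0(r)]$, Lemma \ref{lemma-k-3} applied with $k=5$ produces one with parameters $[31(m+1)+r,\,5,\,16(m+1)+d_0(r)]$ by appending a copy of $S_5$, and the new distance again meets the Griesmer bound, so optimality propagates upward through the residue class.

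The main obstacle is certifying the three base SO codes. As the remark after Lemma \ref{lemma-k-3} emphasises, a long optimal SO code need not deflate to an optimal SO code at the base length unless $2d\geq n$, a condition that fails at $[45,5,22]$ and $[53,5,26]$, so one cannot construct the base codes by stripping a simplex block off a longer example. A self-contained argument therefore needs explicit generator matrices for these three codes, either reproduced from \cite{Kim-SO} or produced by an independent construction (for instance by embedding an optimal even-like $[r,5]$ code into a SO code of the same parameters and verifying self-orthogonality directly); once that is in hand, the inductive step above is purely mechanical.
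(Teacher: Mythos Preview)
Your proposal is correct and follows essentially the same route as the paper: reduce to the base cases $[45,5,22]$, $[53,5,26]$, $[60,5,30]$ from \cite{Kim-SO}, then propagate along each residue class in steps of $31$ via Lemma~\ref{lemma-k-3}, and compare against the Griesmer bound. One minor economy in the paper's argument is that it does not appeal to tables to pin down $d(n,5)$ separately; the sandwich $16m+d_0\le d_{so}(n,5)\le d(n,5)\le 16m+d_0$ already forces equality throughout, since the SO code itself witnesses the lower bound.
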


\begin{proof}
(i) When $n=31m+14$ and $k=5$, the largest minimum distance $d$ satisfying the Griesmer bound is $d=16m+6$.
When $m=1$, there exists a binary $[45,5,22]$ SO code \cite{Kim-SO}. By Lemma \ref{lemma-k-3}, there is a binary $[31m+14,5,16m+6]$ SO code for $m\geq 1$. Hence
\begin{align*}
  16m+6\leq d_{so}(31m+14,5)
    \leq d(31m+14,5)\leq d= 16m+6.
\end{align*}
This implies that $d_{so}(31m+14,5)=d(31m+14,5)=16m+6$, i.e., $d_{so}(n,5)=d(n,5)$ for $n\geq 32$ and $n\equiv 14~({\rm mod}~31)$.

(ii) When $n=31m+22$ and $k=5$, the largest minimum distance $d$ satisfying the Griesmer bound is $d=16m+10$.
When $m=1$, there exists a binary $[53,5,26]$ SO code \cite{Kim-SO}. According to Lemma \ref{lemma-k-3}, there is a binary $[31m+22,5,16m+10]$ SO code for $m\geq 1$. Hence
\begin{align*}
  16m+10\leq d_{so}(31m+22,5)
  \leq  d(31m+22,5)\leq d=16m+10.
\end{align*}
This implies that $d_{so}(31m+22,5)=d(31m+22,5)=16m+10$, i.e., $d_{so}(n,5)=d(n,5)$ for $n\geq 32$ and $n\equiv 22~({\rm mod}~31)$.

(iii) When $n=31m+29$ and $k=5$, the largest minimum distance $d$ satisfying the Griesmer bound is $d=16m+14$.
When $m=1$, there exists a binary $[60,5,30]$ SO code \cite{Kim-SO}. According to Lemma \ref{lemma-k-3}, there is a binary $[31m+29,5,16m+14]$ SO code for $m\geq 1$. Hence
\begin{align*}
  16m+14\leq   d_{so}(31m+29,5)
  \leq  d(31m+29,5)\leq d=16m+14.
\end{align*}
This implies that $d_{so}(31m+29,5)=d(31m+29,5)=16m+14$, i.e., $d_{so}(n,5)=d(n,5)$ for $n\geq 32$ and $n\equiv 29~({\rm mod}~31)$.
\end{proof}

In order to solve Conjecture 20 in \cite{Kim-SO}, we introduce an interesting and useful lemma, which was proved in \cite[Lemma 3.1]{Li-Xu-Zhao}.

\begin{lem}{\rm\cite{Li-Xu-Zhao}}\label{lem-Li-Xu-Zhao}
If there is a binary $[n,k+1,2m]$ SO code, then there is an even-like binary $\left[n-2m,k,2\lceil \frac{m}{2}\rceil\right]$ linear code.
\end{lem}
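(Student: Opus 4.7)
The plan is a residual/shortening construction based on a minimum-weight codeword. Let $C$ denote the hypothesized $[n,k+1,2m]$ SO code and fix $c \in C$ with $\mathrm{wt}(c) = 2m$. After permuting coordinates I may assume $c = (1,\ldots,1,0,\ldots,0)$ with $2m$ ones followed by $n-2m$ zeros. Let $\pi : C \to \F_2^{n-2m}$ be projection onto the last $n-2m$ coordinates; the candidate code is $D := \pi(C)$.

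To see that $\dim D = k$, I would show $\ker(\pi|_C) = \langle c\rangle$: any $x \in \ker(\pi|_C)$ has support inside $\{1,\ldots,2m\}$, hence $\mathrm{wt}(x) \le 2m$, and minimality of the distance forces $x \in \{0, c\}$. The even-like property then drops out of self-orthogonality applied twice: for every $x \in C$, $\langle x,x\rangle = 0$ says $\mathrm{wt}(x)$ is even, and $\langle x,c\rangle = 0$ says $\mathrm{wt}(x|_{[2m]})$ is even, so $\mathrm{wt}(\pi(x)) = \mathrm{wt}(x) - \mathrm{wt}(x|_{[2m]})$ is even.

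The main obstacle is the minimum-distance bound. For a nonzero $y \in D$, pick a preimage $x \in C$; its two preimages are $x$ and $x+c$, and $\mathrm{wt}(x|_{[2m]}) + \mathrm{wt}((x+c)|_{[2m]}) = 2m$. Replacing $x$ by $x+c$ if necessary, I may assume $\mathrm{wt}(x|_{[2m]}) \le m$. Combined with the parity constraint that $\mathrm{wt}(x|_{[2m]})$ is even (inherited from $\langle x,c\rangle = 0$), this tightens to $\mathrm{wt}(x|_{[2m]}) \le 2\lfloor m/2\rfloor$. Since $x \notin \{0,c\}$ forces $\mathrm{wt}(x) \ge 2m$, I obtain
$$\mathrm{wt}(\pi(x)) \;=\; \mathrm{wt}(x) - \mathrm{wt}(x|_{[2m]}) \;\ge\; 2m - 2\lfloor m/2\rfloor \;=\; 2\lceil m/2\rceil.$$
The subtle point is that the even-parity of $\mathrm{wt}(x|_{[2m]})$ is precisely what upgrades the naive bound $m$ to $2\lceil m/2\rceil$ in the odd-$m$ case, so genuine self-orthogonality of $c$ against the rest of $C$ (not merely the evenness of $2m$) is essential.
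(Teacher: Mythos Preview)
Your argument is correct: this is the standard residual-code construction, sharpened by the observation that self-orthogonality forces the restriction $x|_{[2m]}$ to have even weight, which is exactly what gives $2\lceil m/2\rceil$ rather than merely $m$. The paper itself does not supply a proof of this lemma---it simply cites \cite[Lemma 3.1]{Li-Xu-Zhao}---so there is nothing in the present paper to compare your proof against; your approach is the natural one and almost certainly matches the original.
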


\begin{theorem}{\rm \cite[Conjecture 20]{Kim-SO}}
If $n=14,21,22,28,29$ or if $n\geq 32$ and $n\equiv 6,13,21,28~({\rm mod}~31)$, then $d_{so}(n,5)=d(n,5)-2$, i.e., there are no $[n,5,d(n,5)]$ SO codes.
\end{theorem}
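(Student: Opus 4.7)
The plan is to establish $d_{so}(n,5) = d(n,5) - 2$ by separately proving the matching upper and lower bounds in each regime. The upper bound amounts to showing that no $[n,5,d(n,5)]$ SO code exists, while the lower bound is an existence statement that I would prove by exhibiting $[n,5,d(n,5)-2]$ SO codes. The two key tools are Lemma \ref{lem-Li-Xu-Zhao} (which shrinks a hypothetical SO code into a lower-dimensional even-like code whose parameters can be tested against the Griesmer bound) and Lemma \ref{lemma-k-3} (which lengthens small SO codes by appending copies of the simplex generator $S_5$).

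For the non-existence side, when $n \geq 32$ and $n \equiv r \pmod{31}$ with $r \in \{6,13,21,28\}$, I would first verify that the Griesmer optimum takes the form $d(n,5) = 16m + \varepsilon$ with $\varepsilon \in \{2,6,10,14\}$ respectively, where $n = 31m + r$. Supposing such a $[n,5,d(n,5)]$ SO code exists, Lemma \ref{lem-Li-Xu-Zhao} yields an even-like $[n - d(n,5),\, 4,\, 2\lceil d(n,5)/4 \rceil]$ linear code, whose parameters reduce in each residue to $[15m + (r-\varepsilon),\, 4,\, 8m + 2\delta]$ for a small explicit $\delta$. A short arithmetic check then shows that the four-term Griesmer bound for $k=4$ exceeds the length by exactly $1$ in every residue class, producing a contradiction. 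The same argument at $m = 0$ handles the boundary cases $n = 21$ and $n = 28$. For the remaining small cases $n = 14, 22, 29$, however, the derived parameters satisfy the $k=4$ Griesmer bound with equality and the argument does not close; I would dispose of these by invoking the exhaustive classification of binary optimal SO codes of length $\leq 40$ from \cite{SO-40}, which directly rules out $[14,5,6]$, $[22,5,10]$, and $[29,5,14]$ SO codes.

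For the existence side, I would take the $m = 1$ base codes $[37,5,16]$, $[44,5,20]$, $[52,5,24]$, $[59,5,28]$ (known from \cite{Kim-SO}) and iterate Lemma \ref{lemma-k-3} with $k=5$, adjoining copies of $S_5$ to propagate them into $[31m+r,\, 5,\, 16m + (r-6)]$ SO codes for every $m \geq 1$; the five isolated small cases are supplied by explicit codes in \cite{SO-40, Kim-SO}. The principal obstacle I anticipate is precisely the gap between the uniform Lemma-plus-Griesmer argument and the three exceptional lengths $n = 14, 22, 29$, where the Griesmer inequality on the shortened code is tight and purely bound-theoretic reasoning is insufficient, so one must rely on the exhaustive classification in \cite{SO-40}. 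A secondary concern is that the case-by-case verification for the four residues, while uniform in shape, must be carried out carefully to confirm that the Griesmer slack is genuinely $+1$ and not $0$ in each.
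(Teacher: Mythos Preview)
Your proposal is correct and follows essentially the same route as the paper: both use Lemma~\ref{lem-Li-Xu-Zhao} to reduce a hypothetical $[n,5,d(n,5)]$ SO code to a four-dimensional even-like code violating the Griesmer bound, use Lemma~\ref{lemma-k-3} to propagate small SO base codes through the residue classes for the lower bound, and defer the sporadic lengths $n=14,22,29$ to the classification in \cite{SO-40}. The only minor differences are that the paper cites \cite{SO-40} uniformly for all five small lengths (whereas you observe that $n=21,28$ already fall to the Griesmer argument at $m=0$), and the paper draws its base SO codes $[13,5,4]$, $[21,5,8]$, $[28,5,12]$, $[37,5,16]$ from \cite{SO-40} rather than the $m=1$ codes you attribute to \cite{Kim-SO}; neither of these affects the argument.
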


\begin{proof}
By Table 1 in \cite{SO-40}, there is no binary $[n,5,d(n,5)]$ SO code for $n=14,21,22,28,29$.
From the Database \cite{codetables}, there are binary $[37,5,18]$, $[44,5,22]$, $[52,5,26]$ and $[59,5,30]$ linear codes. Similar to Lemma \ref{lemma-k-3}, there are
binary $[31m+6,5,16m+2]$, $[31m+13,5,16m+6]$, $[31m+21,5,16m+10]$ and $[31m+28,5,16m+14]$ linear codes for some integer $m\geq 1$. Combined with the Griesmer bound, we have
$$\begin{array}{ll}
    d(31m+6,5)=16m+2, &
     d(31m+13,5)=16m+6, \\
    d(31m+21,5)=16m+10,&
     d(31m+28,5)=16m+14.
  \end{array}
$$

If there are binary $[31m+6,5,16m+2]$, $[31m+13,5,16m+6]$, $[31m+21,5,16m+10]$ and $[31m+28,5,16m+14]$ SO codes for some integer $m\geq 1$, then it follows from Lemma \ref{lem-Li-Xu-Zhao} that there are even-like binary $[15m+4,4,8m+2]$, $[15m+7,4,8m+4]$, $[15m+11,4,8m+6]$ and $[15m+14,4,8m+8]$ codes, which contradicts the following Griesmer bounds.
$$\begin{array}{ll}
    d(15m+4,4)\leq 8m+1, &
     d(15m+7,4)\leq8m+3, \\
    d(15m+11,4)\leq8m+5, &
     d(15m+14,4)\leq8m+7.
  \end{array}
$$
Hence
$$\begin{array}{ll}
    d_{so}(31m+6,5)\leq16m, &
    d_{so}(31m+13,5)\leq16m+4, \\
    d_{so}(31m+21,5)\leq16m+8, &
     d_{so}(31m+28,5)\leq16m+12.
  \end{array}
$$

By Table 1 in \cite{SO-40}, there exist binary $[37,5,16]$, $[13,5,4]$, $[21,5,8]$ and $[28,5,12]$ SO codes. By Lemma \ref{lemma-k-3}, there are binary $[31m+6,5,16m]$, $[31m+13,5,16m+4]$, $[31m+21,5,16m+8]$ and $[31m+28,5,16m+12]$ SO codes for $m\geq 1$. Hence
$$\begin{array}{ll}
    d_{so}(31m+6,5)\geq 16m, &
    d_{so}(31m+13,5)\geq16m+4, \\
    d_{so}(31m+21,5)\geq16m+8, &
    d_{so}(31m+28,5)\geq16m+12.
  \end{array}
$$
This implies that
$${\small\begin{array}{l}
    d_{so}(31m+6,5)= 16m=d(31m+6,5)-2, \\
     d_{so}(31m+13,5)=16m+4=d(31m+13,5)-2, \\
    d_{so}(31m+21,5)=16m+8=d(31m+21,5)-2, \\
    d_{so}(31m+28,5)=16m+12=d(31m+28,5)-2.
  \end{array}}
$$
That is to say, $d_{so}(n,5)=d(n,5)-2$ for $n\geq 32$ and $n\equiv 6,13,21,28~({\rm mod}~31)$. This completes the proof.
\end{proof}

\section{Binary optimal $[n,6]$ self-orthogonal codes}

Using a similar approach, we can completely generalize the results of \cite[Theorem 21]{Kim-SO}.

\begin{theorem} {\rm \cite{Kim-SO}}
For lengths $41\leq n\leq 256$, if $n \not\equiv 7,14,22,29,38,45,53,60~({\rm mod\ 63})$ and $n\neq 46,54,61$, then $d_{so}(n,6)=2\left\lfloor \frac{d(n,6)}{2}\right\rfloor$.
\end{theorem}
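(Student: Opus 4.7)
The plan is to apply the Simplex-code construction of Lemma \ref{lemma-k-3} with $k=6$. Since $S_6$ generates a binary $[63,6,32]$ SO Griesmer code, the lemma converts every binary $[n_0,6,d_0]$ SO code into a binary $[n_0+63m,\,6,\,d_0+32m]$ SO code for each $m\geq 0$. Combined with the Remark, which supplies the universal upper bound $d_{so}(n,6)\leq 2\lfloor d(n,6)/2\rfloor$, the task reduces to exhibiting one sufficiently short SO code in each admissible residue class modulo $63$ that already meets this upper bound.

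Concretely, for each of the $55$ residues $r\in\{0,1,\dots,62\}\setminus\{7,14,22,29,38,45,53,60\}$ I will fix a base length $n_r\equiv r\pmod{63}$ with $n_r\geq 41$ and $n_r\notin\{46,54,61\}$, together with a known binary $[n_r,6,d_r]$ SO code satisfying $d_r=2\lfloor d(n_r,6)/2\rfloor$. For $52$ of the admissible residues the smallest valid representative lies in $[41,103]$; for the three residues $r\in\{46,54,61\}$ the smallest candidate is excluded by hypothesis, so I shift to $n_r=r+63\in\{109,117,124\}$. These base SO codes will be supplied either by the constructions already recorded in \cite{Kim-SO} or by direct lookup in \cite{codetables}.

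Propagating each base code via Lemma \ref{lemma-k-3} then yields a binary $[n_r+63m,\,6,\,d_r+32m]$ SO code for every $m\geq 0$ with $n_r+63m\leq 256$, so the construction covers every admissible $n$ in the stated range. To conclude $d_{so}(n_r+63m,6)=2\lfloor d(n_r+63m,6)/2\rfloor$, I only need to confirm that $d(n_r+63m,6)\in\{d_r+32m,\,d_r+32m+1\}$; the inequality $d(n_r+63m,6)\geq d_r+32m$ is automatic from applying the same Simplex extension to an optimal (not necessarily SO) linear $[n_r,6,d(n_r,6)]$ code, while the matching upper bound on $d(n_r+63m,6)$ is a finite table lookup in \cite{codetables} (or follows from the Griesmer bound whenever that bound is known to be tight in this range). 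The argument is therefore conceptually immediate; the main obstacle is organizational, namely pinning down $55$ base SO codes and verifying the value of $d(n,6)$ for the finitely many lengths $n\leq 256$ that arise.
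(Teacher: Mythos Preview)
Your proposal is correct and follows essentially the same line as the paper. Note that the paper does not prove this particular statement itself (it is quoted from \cite{Kim-SO}); what the paper proves is the generalization to all $n\geq 41$ (Theorem~5.2), and for $41\leq n\leq 62$ it simply defers to \cite{Kim-SO}, just as your proposal would have to defer to the explicit constructions there for the short base codes.

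The one organizational difference worth noting is that the paper handles only $16$ cases rather than $55$. Instead of fixing a separate base code for every admissible residue, it fixes one base SO code per plateau of $2\lfloor d(n,6)/2\rfloor$, applies the Griesmer bound at the right endpoint of the plateau, and then uses the trivial monotonicity $d_{so}(n,6)\leq d_{so}(n+1,6)$ to sweep across all residues on that plateau. Besides cutting the casework, using Griesmer rather than a finite table lookup is what lets the paper's argument extend to all $m\geq 1$ and hence prove the unrestricted Theorem~5.2. Your approach, as you acknowledge, is perfectly adequate for the bounded range $41\leq n\leq 256$ but would not immediately give the generalization.
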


\begin{theorem}
For lengths $n\geq 41$, if $n \not\equiv 7,14,22,29,38,45,53,60~({\rm mod\ 63})$ and $n\neq 46,54,61$, then $d_{so}(n,6)=2\left\lfloor \frac{d(n,6)}{2}\right\rfloor$.
\end{theorem}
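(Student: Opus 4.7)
The plan is to lift the previous theorem (valid for $41 \leq n \leq 256$) to all $n \geq 41$ by strong induction on $n$, using Lemma~\ref{lemma-k-3} with $k = 6$, so that the simplex piece has parameters $[63, 6, 32]$. The base case $41 \leq n \leq 256$ is immediate from the cited result. For the inductive step I would fix $n \geq 257$ in one of the permitted residue classes and set $n_1 := n - 63 \geq 194$; then $n_1$ lies in the same residue class modulo $63$ and automatically avoids $\{46, 54, 61\}$, so the inductive hypothesis yields $d_{so}(n_1, 6) = 2\lfloor d(n_1, 6)/2\rfloor$.

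Next, I would apply Lemma~\ref{lemma-k-3} with $m = 1$ to an optimal SO $[n_1, 6, d_{so}(n_1, 6)]$ code to obtain an SO $[n, 6, d_{so}(n_1, 6) + 32]$ code, and separately to an optimal (possibly non-SO) $[n_1, 6, d(n_1, 6)]$ code to deduce $d(n, 6) \geq d(n_1, 6) + 32$. Combined with the universal bound $d_{so}(n, 6) \leq 2\lfloor d(n, 6)/2\rfloor$ from the remark in Section~2, these give
\[
2\lfloor d(n_1, 6)/2\rfloor + 32 \;\leq\; d_{so}(n, 6) \;\leq\; 2\lfloor d(n, 6)/2\rfloor,
\]
so closing the inductive step reduces to establishing the reverse inequality $d(n, 6) \leq d(n_1, 6) + 32$.

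For this I would exploit the Griesmer identity $g(6, d + 32) = g(6, d) + 63$, which is an immediate consequence of $\lceil (d+32)/2^i \rceil = \lceil d/2^i \rceil + 2^{5-i}$ for $0 \leq i \leq 5$, summing to $32+16+8+4+2+1 = 63$. This identity transports Griesmer-tightness along length-shifts of $63$: if $d(n_1, 6)$ equals the Griesmer-optimal value $\max\{d : g(6, d) \leq n_1\}$, then so does $d(n, 6)$, forcing $d(n, 6) = d(n_1, 6) + 32$ and collapsing the above chain to equality.

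The main obstacle I anticipate is verifying Griesmer-tightness of $d(n_1, 6)$ for each of the $55 = 63 - 8$ permitted residue classes modulo $63$ throughout the base range $[41, 256]$. This is a finite but laborious case check that can be read off from standard tables of best known binary linear codes in dimension $6$ (e.g., \cite{codetables}); the pattern of good residues was presumably singled out in \cite{Kim-SO} precisely because Griesmer-tightness holds there. Once this verification is in hand, the induction closes uniformly and the equality $d_{so}(n,6) = 2\lfloor d(n,6)/2\rfloor$ propagates to every $n \geq 41$ in the permitted set.
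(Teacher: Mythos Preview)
Your approach is correct and amounts to the same argument as the paper's: both propagate optimal SO codes along length-$63$ shifts via Lemma~\ref{lemma-k-3} and bound $d(n,6)$ above by the Griesmer value, and your Griesmer identity $g(6,d+32)=g(6,d)+63$ is just the uniform version of the paper's residue-by-residue Griesmer computations. The paper organizes the work as sixteen explicit cases (one per block of permitted residues, each seeded by a specific SO code from \cite{Kim-SO} or MAGMA), whereas you repackage it as strong induction with Theorem~5.1 as base case; the ``laborious case check'' of Griesmer-tightness that you defer is exactly what those sixteen cases carry out, so you should fold Griesmer-tightness of $d(n_1,6)$ into your inductive hypothesis rather than treat it as a separate verification.
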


\begin{proof}
We can refer to \cite[Theorem 21]{Kim-SO} for $n < 63$, so we only need to consider $n \geq 63$. The proof is discussed in 16 cases.

(i) It is well-known that the binary Simplex code of dimension $6$ is an $[63,6,32]$ Griesmer SO code. So there is a binary $[63m,6,32m]$ Griesmer SO code for an integer $m\geq 1$.
When $n=63m+6$ and $k=6$, the largest minimum distance $d$ satisfying the Griesmer bound is $d=32m+1$. So
\begin{align*}
  32m\leq d_{so}(63m,6) \leq  d_{so}(63m+6,6)
  \leq  2\left\lfloor \frac{d(63m+6,6)}{2}\right\rfloor
  \leq 2\left\lfloor \frac{d}{2}\right\rfloor=32m.
\end{align*}
It follows from $d_{so}(n,6)\leq d_{so}(n+1,6)$ that $d_{so}(n,6)=2\left\lfloor \frac{d(n,6)}{2}\right\rfloor$ for $n\geq 63$ and $n \equiv 0,1,2,3,4,5,6~({\rm mod\ 63})$.\\

(ii) From \cite{Kim-SO}, there is a binary $[71,6,34]$ SO code, which is optimal with respect to the Griesmer bound. According to Lemma \ref{lemma-k-3}, there are binary $[63m+8,6,32m+2]$ SO codes for $m\geq 1$. When $n=63m+9$ and $k=6$, the largest minimum distance $d$ satisfying the Griesmer bound is $d=32m+3$. So
{\small\begin{align*}
  32m+2\leq d_{so}(63m+8,6)\leq  d_{so}(63m+9,6)
  \leq  2\left\lfloor \frac{d(63m+9,6)}{2}\right\rfloor
  \leq 2\left\lfloor \frac{d}{2}\right\rfloor=32m+2.
\end{align*}}
This implies that $d_{so}(n,6)=2\left\lfloor \frac{d(n,6)}{2}\right\rfloor$ for $n\geq 63$ and $n \equiv 8,9~({\rm mod\ 63})$.\\

(iii) From best-known linear codes (BKLC) database of MAGMA \cite{magma}, there is a binary $[73,6,36]$ SO code, which is optimal with respect to the Griesmer bound. According to Lemma \ref{lemma-k-3}, there are binary $[63m+10,6,32m+4]$ SO codes for $m\geq 1$. When $n=63m+13$ and $k=6$, the largest minimum distance $d$ satisfying the Griesmer bound is $d=32m+5$. So
{\small\begin{align*}
 32m+4\leq d_{so}(63m+10,6)\leq  d_{so}(63m+13,6)
  \leq   2\left\lfloor \frac{d(63m+13,6)}{2}\right\rfloor\
  \leq  2\left\lfloor \frac{d}{2}\right\rfloor=32m+4.
\end{align*}}
It follows from $d_{so}(n,6)\leq d_{so}(n+1,6)$ that $d_{so}(n,6)=2\left\lfloor \frac{d(n,6)}{2}\right\rfloor$ for $n\geq 63$ and $n \equiv 10,11,12,13~({\rm mod\ 63})$.\\

(iv) We start from a binary $[78,6,38]$ SO code (see \cite{Kim-SO}). Combining Lemma \ref{lemma-k-3} and the Griesmer bound, we have
{\small\begin{align*}
  32m+6\leq d_{so}(63m+15,6)\leq  d_{so}(63m+16,6)
  \leq  2\left\lfloor \frac{d(63m+16,6)}{2}\right\rfloor
  \leq 32m+6.
\end{align*}}
This implies that $d_{so}(n,6)=2\left\lfloor \frac{d(n,6)}{2}\right\rfloor$ for $n\geq 63$ and $n \equiv 15,16~({\rm mod\ 63})$.\\

(v) We start from a binary $[80,6,40]$ SO code (see BKLC in \cite{magma}). Combining Lemma \ref{lemma-k-3} and the Griesmer bound, we have
{\small\begin{align*}
  32m+8\leq d_{so}(63m+17,6)\leq   d_{so}(63m+21,6)
  \leq 2\left\lfloor \frac{d(63m+21,6)}{2}\right\rfloor
  \leq 32m+8.
\end{align*}}
It follows from $d_{so}(n,6)\leq d_{so}(n+1,6)$ that $d_{so}(n,6)=2\left\lfloor \frac{d(n,6)}{2}\right\rfloor$ for $n\geq 63$ and $n \equiv 17,18,19,20,21~({\rm mod\ 63})$.\\

(vi) We start from a binary $[86,6,42]$ SO code (see \cite{Kim-SO}). Combining Lemma \ref{lemma-k-3} and the Griesmer bound, we have
{\small\begin{align*}
 32m+10\leq d_{so}(63m+23,6)\leq   d_{so}(63m+24,6)
  \leq 2\left\lfloor \frac{d(63m+24,6)}{2}\right\rfloor
  \leq 32m+10.
\end{align*}}
This implies that $d_{so}(n,6)=2\left\lfloor \frac{d(n,6)}{2}\right\rfloor$ for $n\geq 63$ and $n \equiv 23,24~({\rm mod\ 63})$.

(vii) We start from a binary $[88,6,44]$ SO code (see BKLC in \cite{magma}). Combining Lemma \ref{lemma-k-3} and the Griesmer bound, we have
{\small\begin{align*}
 32m+12\leq d_{so}(63m+25,6)\leq   d_{so}(63m+28,6)
  \leq 2\left\lfloor \frac{d(63m+28,6)}{2}\right\rfloor
  \leq 32m+12.
\end{align*}}
It follows from $d_{so}(n,6)\leq d_{so}(n+1,6)$ that $d_{so}(n,6)=2\left\lfloor \frac{d(n,6)}{2}\right\rfloor$ for $n\geq 63$ and $n \equiv 25,26,27,28~({\rm mod\ 63})$.\\

(viii) We start from a binary $[93,6,46]$ SO code (see \cite{Kim-SO}). Combining Lemma \ref{lemma-k-3} and the Griesmer bound, we have
{\small\begin{align*}
 32m+14\leq d_{so}(63m+30,6)\leq   d_{so}(63m+31,6)
  \leq 2\left\lfloor \frac{d(63m+31,6)}{2}\right\rfloor
  \leq 32m+14.
\end{align*}}
This implies that $d_{so}(n,6)=2\left\lfloor \frac{d(n,6)}{2}\right\rfloor$ for $n\geq 63$ and $n \equiv 30,31~({\rm mod\ 63})$.\\

(ix) We start from a binary $[95,6,48]$ SO code (see \cite{Kim-SO}). Combining Lemma \ref{lemma-k-3} and the Griesmer bound, we have
{\small\begin{align*}
 32m+16\leq d_{so}(63m+32,6)\leq  d_{so}(63m+37,6)
  \leq 2\left\lfloor \frac{d(63m+37,6)}{2}\right\rfloor
  \leq 32m+16.
\end{align*}}
It follows from $d_{so}(n,6)\leq d_{so}(n+1,6)$ that $d_{so}(n,6)=2\left\lfloor \frac{d(n,6)}{2}\right\rfloor$ for $n\geq 63$ and $n \equiv 32,33,34,35,36,37~({\rm mod\ 63})$.\\

(x) We start from a binary $[102,6,50]$ SO code (see \cite{Kim-SO}). Combining Lemma \ref{lemma-k-3} and the Griesmer bound, we have
{\small\begin{align*}
 32m+18\leq d_{so}(63m+39,6)\leq   d_{so}(63m+40,6)
  \leq 2\left\lfloor \frac{d(63m+40,6)}{2}\right\rfloor
  \leq 32m+18.
\end{align*}}
This implies that $d_{so}(n,6)=2\left\lfloor \frac{d(n,6)}{2}\right\rfloor$ for $n\geq 63$ and $n \equiv 39,40~({\rm mod\ 63})$.\\

(xi) We start from a binary $[104,6,52]$ SO code (see BKLC in \cite{magma}). Combining Lemma \ref{lemma-k-3} and the Griesmer bound, we have
{\small\begin{align*}
 32m+20\leq d_{so}(63m+41,6)\leq   d_{so}(63m+44,6)
  \leq 2\left\lfloor \frac{d(63m+44,6)}{2}\right\rfloor
  \leq 32m+20.
\end{align*}}
It follows from $d_{so}(n,6)\leq d_{so}(n+1,6)$ that $d_{so}(n,6)=2\left\lfloor \frac{d(n,6)}{2}\right\rfloor$ for $n\geq 63$ and $n \equiv 41,42,43,44~({\rm mod\ 63})$.\\

(xii) We start from a binary $[109,6,54]$ SO code (see \cite{Kim-SO}). Combining Lemma \ref{lemma-k-3} and the Griesmer bound, we have
{\small\begin{align*}
 32m+22\leq d_{so}(63m+46,6)\leq   d_{so}(63m+47,6)
  \leq 2\left\lfloor \frac{d(63m+47,6)}{2}\right\rfloor
  \leq 32m+22.
\end{align*}}
This implies that $d_{so}(n,6)=2\left\lfloor \frac{d(n,6)}{2}\right\rfloor$ for $n\geq 63$ and $n \equiv 46,47~({\rm mod\ 63})$.\\

(xiii) We start from a binary $[111,6,56]$ SO code (see BKLC in \cite{magma}). Combining Lemma \ref{lemma-k-3} and the Griesmer bound, we have
{\small\begin{align*}
 32m+24\leq d_{so}(63m+48,6)\leq   d_{so}(63m+52,6)
  \leq 2\left\lfloor \frac{d(63m+52,6)}{2}\right\rfloor
  \leq 32m+24.
\end{align*}}
It follows from $d_{so}(n,6)\leq d_{so}(n+1,6)$ that $d_{so}(n,6)=2\left\lfloor \frac{d(n,6)}{2}\right\rfloor$ for $n\geq 63$ and $n \equiv 48,49,50,51,52~({\rm mod\ 63})$.\\

(xiv) We start from a binary $[117,6,58]$ SO code (see \cite{Kim-SO}). Combining Lemma \ref{lemma-k-3} and the Griesmer bound, we have
{\small\begin{align*}
 32m+26\leq d_{so}(63m+54,6)\leq  d_{so}(63m+55,6)
  \leq 2\left\lfloor \frac{d(63m+55,6)}{2}\right\rfloor
  \leq 32m+26.
\end{align*}}
This implies that $d_{so}(n,6)=2\left\lfloor \frac{d(n,6)}{2}\right\rfloor$ for $n\geq 63$ and $n \equiv 54,55~({\rm mod\ 63})$.\\

(xv) We start from a binary $[119,6,60]$ SO code (see BKLC in \cite{magma}). Combining Lemma \ref{lemma-k-3} and the Griesmer bound, we have
{\small\begin{align*}
 32m+28\leq d_{so}(63m+56,6)\leq   d_{so}(63m+59,6)
  \leq 2\left\lfloor \frac{d(63m+59,6)}{2}\right\rfloor
  \leq 32m+28.
\end{align*}}
It follows from $d_{so}(n,6)\leq d_{so}(n+1,6)$ that $d_{so}(n,6)=2\left\lfloor \frac{d(n,6)}{2}\right\rfloor$ for $n\geq 63$ and $n \equiv 56,57,58,59~({\rm mod\ 63})$.\\

(xvi) We start from a binary $[124,6,62]$ SO code (see \cite{Kim-SO}). Combining Lemma \ref{lemma-k-3} and the Griesmer bound, we have
{\small\begin{align*}
 32m+30\leq d_{so}(63m+61,6)\leq  d_{so}(63m+62,6)
  \leq 2\left\lfloor \frac{d(63m+62,6)}{2}\right\rfloor
  \leq 32m+30.
\end{align*}}
This implies that $d_{so}(n,6)=2\left\lfloor \frac{d(n,6)}{2}\right\rfloor$ for $n\geq 63$ and $n \equiv 61,62~({\rm mod\ 63})$.
\end{proof}

\begin{theorem}
If $n\geq 63$ and $n \equiv 7,14,22,29,38,45,53,60~({\rm mod\ 63})$, then $d_{so}(n,6)=d(n,6)-2$, i.e., there are no binary $[n,6,d(n,6)]$ SO codes.
\end{theorem}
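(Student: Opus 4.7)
The plan is to mimic closely the argument used in the proof of the Section~4 dimension-5 companion theorem, running the same two-sided squeeze for each of the eight residue classes modulo $63$.

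First, for each residue $r \in \{7,14,22,29,38,45,53,60\}$ I would pin down $d(63m+r,6)$ exactly. Writing the Griesmer sum $\sum_{i=0}^{5}\lceil d/2^i\rceil$ and matching it against $63m+r$, the candidate largest distances are $d=32m+2,\;32m+6,\;32m+10,\;32m+14,\;32m+18,\;32m+22,\;32m+26,\;32m+30$ respectively. Existence of a (not necessarily SO) $[63m+r,6,d]$ code meeting each bound follows for small $m$ from the BKLC database and the codetables reference, and for general $m\geq 1$ by concatenating with copies of the simplex generator $S_6$ in the spirit of Lemma~\ref{lemma-k-3} (the lemma's proof does not actually use self-orthogonality for the distance count). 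This establishes the exact value of $d(n,6)$ in each residue class.

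Next, to obtain the upper bound $d_{so}(n,6)\leq d(n,6)-2$, I would argue by contradiction: assume a $[63m+r,6,d]$ SO code exists with $d$ as above. By Lemma~\ref{lem-Li-Xu-Zhao} (with $k+1=6$ and $2m'=d$) this produces an even-like $[63m+r-d,\,5,\,2\lceil d/4\rceil]$ linear code. A direct Griesmer computation on the derived parameters gives, for example in the case $r=7$: a $[31m+5,5,16m+2]$ code, whose Griesmer sum is $31m+6>31m+5$. The eight cases all yield a Griesmer discrepancy of exactly $1$, producing the contradiction uniformly. I would simply tabulate the derived parameters once and note that the Griesmer sum exceeds the length by one in every case.

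For the matching lower bound $d_{so}(n,6)\geq d(n,6)-2$, I would produce, for each $r$, a single base SO code (pulled from \cite{Kim-SO}, the BKLC database of MAGMA, or Table~1 of \cite{SO-40}) of the correct parameters at a small length, namely a $[63+r,6,d(63+r,6)-2]$ SO code, and then invoke Lemma~\ref{lemma-k-3} to propagate it to $[63m+r,6,d(63m+r,6)-2]$ SO codes for every $m\geq 1$ by pre-concatenating $m-1$ copies of $S_6$. Combining both bounds gives $d_{so}(n,6)=d(n,6)-2$ in all eight residue classes.

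The routine but genuine obstacle is the bookkeeping: verifying in each of the eight cases that (a) the Griesmer bound on the putative $[n-d,5]$ code is tight enough to be violated by exactly one, and (b) a suitable base SO code of the requisite parameters is actually available in the literature or in MAGMA, since without such a base the Lemma~\ref{lemma-k-3} propagation has nothing to start from. Once these eight base codes are exhibited, the proof collapses to an eight-fold parallel of the Section~4 argument.
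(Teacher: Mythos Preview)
Your proposal is correct and follows essentially the same argument as the paper: compute $d(63m+r,6)$ via Griesmer plus known codes, rule out $[63m+r,6,d(63m+r,6)]$ SO codes by applying Lemma~\ref{lem-Li-Xu-Zhao} to drop to a dimension-5 code that violates the Griesmer bound, and realize $d(n,6)-2$ by propagating base SO codes from \cite{Kim-SO} through Lemma~\ref{lemma-k-3}. The only cosmetic difference is that the paper lists the eight Griesmer violations as $d(31m+r',5)\le\text{odd}$ rather than as a length-vs-sum discrepancy, and cites Table~2 of \cite{Kim-SO} explicitly for the eight base SO codes $[70,6,32],\ldots,[123,6,60]$.
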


\begin{proof}
From the Database \cite{codetables}, there are binary $[70,6,34]$, $[77,6,38]$, $[85,6,42]$, $[92,6,46]$, $[101,6,50]$, $[108,6,54]$, $[116,6,58]$ and $[123,6,62]$ linear codes. Similar to Lemma \ref{lemma-k-3}, there are binary $[63m+7,6,32m+2]$, $[63m+14,6,32m+6]$, $[63m+22,6,32m+10]$, $[63m+29,6,32m+14]$, $[63m+38,6,32m+18]$, $[63m+45,6,32m+22]$, $[63m+53,6,32m+26]$ and $[63m+60,6,32m+30]$ linear codes for $m\geq 1$. Combined with the Griesmer bound, we have
{\small$$
\begin{array}{ll}
  d(63m+7,6)=32m+2, & d(63m+14,6)=32m+6, \\
   d(63m+22,6)=32m+10, &d(63m+29,6)=32m+14, \\
    d(63m+38,6)=32m+18, & d(63m+45,6)=32m+22, \\
  d(63m+53,6)=32m+26, & d(63m+60,6)=32m+30.
\end{array}
$$}

If there are binary $[63m+7,6,32m+2]$, $[63m+14,6,32m+6]$, $[63m+22,6,32m+10]$, $[63m+29,6,32m+14]$, $[63m+38,6,32m+18]$, $[63m+45,6,32m+22]$, $[63m+53,6,32m+26]$ and $[63m+60,6,32m+30]$ SO codes for $m\geq 1$, then it follows from Lemma \ref{lem-Li-Xu-Zhao} that there are even-like binary $[31m+5,5,16m+2]$, $[31m+8,5,16m+4]$, $[31m+12,5,16m+6]$, $[31m+15,5,16m+8]$, $[31m+20,5,16m+10]$, $[31m+23,5,16m+12]$, $[31m+27,5,16m+14]$ and $[31m+30,5,16m+16]$ linear codes for $m\geq 1$, which contradicts the following Griesmer bound.
{\small$$
\begin{array}{ll}
  d(31m+5,5)\leq 16m+1, & d(31m+8,5)\leq 16m+3,\\
   d(31m+12,5)\leq 16m+5, &d(31m+15,5)\leq 16m+7, \\
    d(31m+20,5)\leq 16m+9, & d(31m+23,5)\leq 16m+11, \\
  d(31m+27,5)\leq 16m+13, & d(31m+30,5)\leq 16m+15.
\end{array}
$$}
Hence
{\small
$$
\begin{array}{ll}
  d_{so}(63m+7,6)\leq32m, & d_{so}(63m+14,6)\leq 32m+4, \\
   d_{so}(63m+22,6)\leq 32m+8, &d_{so}(63m+29,6)\leq 32m+12, \\
    d_{so}(63m+38,6)\leq 32m+16, & d_{so}(63m+45,6)\leq 32m+20, \\
  d_{so}(63m+53,6)\leq 32m+24, & d_{so}(63m+60,6)\leq 32m+28.
\end{array}
$$ }

By Table 2 in \cite{Kim-SO}, there are binary $[70,6,32]$, $[77,6,36]$, $[85,6,40]$, $[92,6,44]$, $[101,6,48]$, $[108,6,52]$, $[116,6,56]$ and $[123,6,60]$ SO codes. According to Lemma \ref{lemma-k-3}, there are binary $[63m+7,6,32m]$, $[63m+14,6,32m+4]$, $[63m+22,6,32m+8]$, $[63m+29,6,32m+12]$, $[63m+38,6,32m+16]$, $[63m+45,6,32m+20]$, $[63m+53,6,32m+24]$ and $[63m+60,6,32m+28]$ SO codes for $m\geq 1$.
Hence
{\small
$$
\begin{array}{ll}
  d_{so}(63m+7,6)\geq32m, & d_{so}(63m+14,6)\geq 32m+4, \\
   d_{so}(63m+22,6)\geq 32m+8, & d_{so}(63m+29,6)\geq 32m+12, \\
    d_{so}(63m+38,6)\geq 32m+16, & d_{so}(63m+45,6)\geq 32m+20, \\
  d_{so}(63m+53,6)\geq 32m+24, & d_{so}(63m+60,6)\geq 32m+28.
\end{array}
$$}
It turns out that
{\small
$$
\begin{array}{ll}
  d_{so}(63m+7,6)=32m, & d_{so}(63m+14,6)= 32m+4, \\
   d_{so}(63m+22,6)= 32m+8, & d_{so}(63m+29,6)= 32m+12, \\
    d_{so}(63m+38,6)= 32m+16, & d_{so}(63m+45,6)= 32m+20, \\
  d_{so}(63m+53,6)= 32m+24, & d_{so}(63m+60,6)= 32m+28.
\end{array}
$$}
This implies that $d_{so}(n,6)=d(n,6)-2$ for $n\geq 63$ and $n \equiv 7,14,22,29,38,45,53,60~({\rm mod\ 63})$.
\end{proof}

\section{Conclusion}
In this paper, we solved the two conjectures proposed by Kim and Choi in \cite{Kim-SO} and completely generalized some results of them by considering binary SO codes related to the simplex codes.

\section*{Conflict of Interest}
The authors have no conflicts of interest to declare that are relevant to the content of this
article.

\section*{Data Deposition Information}
Our data can be obtained from the authors upon reasonable request.

\section*{Acknowledgement}
This research is supported by Natural Science Foundation of China (12071001).

\end{document}